\documentclass[a4paper,UKenglish,cleveref, autoref, thm-restate, numberwithinsect]{lipics-v2021}

\pdfoutput=1 
\hideLIPIcs  

\graphicspath{{./figures/}}

\bibliographystyle{plainurl}

\title{Balanced Independent and Dominating Sets \texorpdfstring{\\}{} on Colored Interval Graphs}


\author{Sujoy Bhore}{Indian Institute of Science Education and Research, Bhopal, India}{sujoy.bhore@gmail.com}{0000-0003-0104-1659}{}
\author{Jan-Henrik Haunert}{Geoinformation Group, University of Bonn, Bonn, Germany}{haunert@igg.uni-bonn.de}{0000-0001-8005-943X}{}
\author{Fabian Klute}{Algorithms and Complexity Group, TU Wien, Vienna, Austria}{fklute@ac.tuwien.ac.at}{0000-0002-7791-3604}{}
\author{Guangping Li}{Algorithms and Complexity Group, TU Wien, Vienna, Austria}{guangping@ac.tuwien.ac.at}{https://orcid.org/0000-0002-7966-076X}{}
\author{Martin N\"ollenburg}{Algorithms and Complexity Group, TU Wien, Vienna, Austria}{noellenburg@ac.tuwien.ac.at}{https://orcid.org/0000-0003-0454-3937}{}
\funding{This work was supported by the Austrian Science Fund (FWF) under grant P31119.}
\authorrunning{S.\,Bhore, J.\,Haunert, F.\,Klute,  G.\,Li, M.\,N\"ollenburg} 

\Copyright{Sujoy Bhore, Jan-Henrik Haunert, Fabian Klute, Guangping Li and Martin N\"ollenburg} 

\begin{CCSXML}
<ccs2012>
<concept>
<concept_id>10003752.10010061.10010063</concept_id>
<concept_desc>Theory of computation~Computational geometry</concept_desc>
<concept_significance>500</concept_significance>
</concept>
   <concept>
       <concept_id>10003752.10003809.10010052.10010053</concept_id>
       <concept_desc>Theory of computation~Fixed parameter tractability</concept_desc>
       <concept_significance>300</concept_significance>
       </concept>
   <concept>
       <concept_id>10003752.10003777.10003779</concept_id>
       <concept_desc>Theory of computation~Problems, reductions and completeness</concept_desc>
       <concept_significance>500</concept_significance>
       </concept>
   <concept>
       <concept_id>10003752.10003809.10003636</concept_id>
       <concept_desc>Theory of computation~Approximation algorithms analysis</concept_desc>
       <concept_significance>500</concept_significance>
       </concept>

\end{CCSXML}

\ccsdesc[500]{Theory of computation~Computational geometry}
\ccsdesc[300]{Theory of computation~Fixed parameter tractability}
\ccsdesc[500]{Theory of computation~Problems, reductions and completeness}
\ccsdesc[500]{Theory of computation~Approximation algorithms analysis}



\keywords{Interval graphs, Independent set, NP-completeness} 

\category{} 

\relatedversion{} 




\nolinenumbers 

\EventEditors{John Q. Open and Joan R. Access}
\EventNoEds{2}
\EventLongTitle{42nd Conference on Very Important Topics (CVIT 2016)}
\EventShortTitle{CVIT 2016}
\EventAcronym{CVIT}
\EventYear{2016}
\EventDate{December 24--27, 2016}
\EventLocation{Little Whinging, United Kingdom}
\EventLogo{}
\SeriesVolume{42}
\ArticleNo{23}
\usepackage{hyperref}
\usepackage{graphicx}
\usepackage{amsthm,amsmath,amssymb}
\usepackage{thmtools,thm-restate}
\usepackage{complexity}
\usepackage{xspace}

\newcommand{\BIS}{\textsc{BIS}\xspace}
\newcommand{\fBIS}{\textsc{$f$-BIS}\xspace}
\newcommand{\fBDS}{\textsc{$f$-BDS}\xspace}
\DeclareMathOperator{\coloring}{\gamma}
\newcommand{\fSAT}[1]{\textsc{$#1$SAT}\xspace}
\newcommand{\new}[1]{\textcolor{black}{#1}}
\newcommand{\remove}[1]{{}}

\begin{document}
\maketitle
\begin{abstract}
We study two new versions of independent and dominating set problems on vertex-colored interval graphs, namely \emph{$f$-Balanced Independent Set} ($f$-BIS) and \emph{$f$-Balanced Dominating Set} ($f$-BDS). 
 Let $G=(V,E)$ be a vertex-colored interval graph with a $k$-coloring $\coloring \colon V \rightarrow \{1,\ldots,k\}$ for some $k \in \mathbb N$.
A subset of vertices $S\subseteq V$ is called \emph{$f$-balanced} if $S$ contains $f$ vertices from each color class. 
In the $f$-BIS and $f$-BDS problems, the objective is to compute an independent set or a dominating set that is $f$-balanced. 
We show that both problems are \NP-complete even on proper interval graphs. 
\new{
For the $f$-BIS problem we design two \FPT\ algorithms, one parameterized by $(f,k)$ for interval graphs and the other parameterized by the vertex cover number for general graphs.}
Moreover, for the optimization variant of \textsc{$1$-BIS}\xspace on interval graphs, we show that a simple greedy approach achieves approximation ratio $2$.
\remove{This greedy approach also provides a constant approximation ratio for axis-parallel rectangles with bounded width or bounded height.}
\end{abstract}

	\section{Introduction}
	A graph $G$ is an interval graph if it has an intersection model consisting of intervals on the real line. 
	Formally, $G=(V,E)$ is an interval graph if there is an assignment of an interval $I_v \subseteq \mathbb R$ for each $v \in V$ such that $I_u\cap I_v$ is nonempty if and only if $\{u,v\}\in E$. 
	A \emph{proper} interval graph is an interval graph that has an intersection model in which no interval properly contains another \cite{golumbic2004algorithmic}. 
	Consider an interval graph $G=(V,E)$ and additionally assume that the vertices of $G$ are $k$-colored by a \emph{color assignment}\footnote{We use the term \emph{color assignment} instead of \emph{vertex coloring} to avoid any confusion with the general notion of vertex coloring; in particular, a color assignment $\coloring$ can map adjacent vertices to the same color.} $\coloring \colon V \rightarrow \{1,\ldots,k\}$. 
	We define and study \emph{color-balanced} versions of two classical graph problems: maximum independent set and minimum dominating set on vertex-colored (proper) interval graphs. In what follows, we define the problems formally and discuss their underlying motivation. 
	
\medskip
	 
	\noindent\underline{\textbf{$f$-Balanced Independent Set (\fBIS):}} 
	Let $G=(V,E)$ be an interval graph with a color assignment of the vertices $\coloring \colon V \rightarrow \{1,\ldots,k\}$. Find an 
	\emph{$f$-balanced independent set} of $G$, i.e., an independent set $L\subseteq V$ 
	that contains exactly $f$ elements from each color class. 
	
	\medskip

	The classic maximum independent set problem serves as a natural model for many real-life optimization problems and finds applications across fields, e.g., computer vision~\cite{balas1986finding}, information retrieval~\cite{pardalos1994maximum}, and scheduling~\cite{van2015interval}. Specifically, it has been used widely in map-labeling problems \cite{AgarwalKS98,wagner1998combinatorial,van1999point,DBLP:journals/comgeo/BeenNPW10}, where an independent set of a given set of label candidates corresponds to a conflict-free and hence legible set of labels. To display as much relevant information as possible, one usually aims at maximizing the cardinality or, in the case of weighted label candidates, the total weight of the independent set.
	This approach may be appropriate if all labels represent objects of the same category.
	In the case of multiple categories, however, maximizing the cardinality or total weight of the labeling does not reflect the aim of selecting a good mixture of different object types. For example, if the aim was to inform a map user about different possible activities in the user's vicinity, labeling one cinema, one theater, and one museum may be better than labeling four cinemas.
	In such a setting, the \fBIS problem asks for an independent set that contains $f$ vertices from each object type.

	We initiate this study for interval graphs, which is a primary step to understand the behavior of this problem on general intersection graphs. Moreover, solving the problem for interval graphs gives rise to optimal solutions for certain labeling models, e.g., if every label candidate is a rectangle that is placed at a fixed position on the boundary of the map \cite{HaunertHermes2014}.

    We also introduce an optimization variant of the \textsc{$1$-BIS} problem, which asks for a maximally colorful independent set.
	
	\medskip 
	
		\noindent\underline{\textbf{$1$-Max-Colored Independent Set (\textsc{$1$-MCIS}\xspace):}} 
	{Let $G=(V,E)$ be an interval graph with a color assignment of the vertices $\coloring \colon V \rightarrow \{1,\ldots,k\}$. The objective is to find a $1$-max-colored independent set of $G$, i.e., an independent set $L\subseteq V$, whose vertices contain a maximum number of colors and $L$ contains at most one element from each color class.
	}

\medskip 

The second problem we discuss is defined as follows.

\medskip 
	
	\noindent\underline{\textbf{$f$-Balanced Dominating Set (\fBDS):}}
	Let $G=(V,E)$ be an interval graphs with a color assignment of the vertices $\coloring \colon V \rightarrow \{1,\ldots,k\}$. Find an  \emph{$f$-balanced dominating set}, i.e., a subset $D\subseteq V$ such that every vertex in $V\setminus D$ is adjacent to at least one vertex in $D$, and $D$ contains exactly $f$ elements from each color class.
	
	\medskip 
	
The dominating set problem is another fundamental problem in theoretical computer science, which also finds applications in various fields of science and engineering~\cite{chang1998algorithmic, haynes2013fundamentals}.
Several variants of the dominating set problem have been considered over the years: $k$-tuple dominating set~\cite{chellali2012k}, Liar's dominating set~\cite{banerjee2019algorithm}, independent dominating set~\cite{irving1991approximating}, and more. 	
The colored variant of the dominating set problem has been considered in parameterized complexity, namely, red-blue dominating set, where the objective is to choose a dominating set from one color class that dominates the other color class~\cite{garnero2017linear}.
Instead, our \fBDS problem asks for a dominating set of a vertex-colored graph that contains $f$ vertices of each color class.  
Similar to the independent set problem, we primarily study this problem on vertex-colored interval graphs, which can be of independent interest. 
	
\new{
\subparagraph{Related Work}
In the field of interval scheduling~\cite{DBLP:journals/ior/CarterT92}, the colored variant of the maximum independent set problem on interval graphs has been well studied. 
This colored interval scheduling problem is known  as the \textsc{Job Interval Scheduling Problem} (\textsc{JISP}), introduced by Nakajima and Hakimi~\cite{DBLP:journals/jal/NakajimaH82}.
Let  $T = \{t_1, t_2, \ldots t_n\}$ be a set of $n$ independent tasks.
Each task $t_i$ has a task duration $d_i$ and a set of $k_i$ possible starting times $\{s_{i1}, s_{i2}, \ldots s_{ik_i}\}$.
Note that all duration and starting times of all tasks are assumed to be integers. 
The problem \textsc{JISP} is to find a starting time for each task so that each task can be executed by a single processor.
If there is some fixed integer $k$ such that  $k_i \leq k$ for $i \in \{1, 2, \ldots n\}$, the problem is referred to as \textsc{JISP}\emph{k} for short. 
Constructing a colored interval graph, where each possible execution of a task $t_i$ starting at $s_{ij}$  corresponds to a half-open interval $[s_{ij}, s_{ij} + d_i)$ of color $i$, the problem \textsc{JISP} is equivalent to finding a  \textsc{$1$-\BIS} in the constructed interval graph. 
Spieksma and Crama~\cite{spieksma1992complexity} showed that \textsc{JISP}$3$ is $\NP$-complete even if each duration $d_i$ equals $2$.
Spieksma~\cite{DBLP:conf/approx/Spieksma98} proved that the optimization variant of \textsc{JISP}, namely finding an optimal schedule that finishes a maximum number of jobs, is $\APX$-hard, and provided a $2$-approximation algorithm.
}

\new{
\subparagraph{Paper Structure}	
While there exist polynomial-time algorithms for the maximum independent set problem and the minimum dominating set problem on interval graphs~\cite{DBLP:journals/networks/GuptaLL82, DBLP:journals/siamcomp/Chang98}, it turns out their colored variants \fBIS and \fBDS are much more resilient and \NP-complete even for proper interval graphs and $f=1$ (Section~\ref{sec:bis_hardness}).
Then, we restrict our attention to the \fBIS problem.
In Section~\ref{sec:bis_algorithms}, we complement the complexity result of the \fBIS problem with two
\FPT\ algorithms, one for interval graphs and parameterized by $(f,k)$ and the other parameterized by the graph's vertex cover number and for general graphs.
Section~\ref{sec:approx-bis} introduces an $O(n \log n)$-time $2$-approximation algorithm for the optimization problem \textsc{$1$-MCIS}\xspace on interval graphs.
}
	
	\section{Complexity Results}\label{sec:bis_hardness}
\new{	In this section we show that \fBIS and \fBDS are \NP-complete even if the given graph $G$ is a proper interval graph and $f = 1$.
	It is readily seen that \fBIS and \fBDS are in the class $\NP$, since for a given potential solution $V' \subset V$ it can be checked in polynomial time whether $V'$ is an independent (dominating) set and contains $f$ vertices of each color.
	To show \NP-hardness of these two problems, we give reductions from two restricted, but still \NP-complete versions of \fSAT{3}, respectively.
	In the following, for two problems $A$ and~$B$, we write~$A \le_p B$ if there is a polynomial-time reduction from~$A$ to~$B$. }

	\subsection{Complexity of \texorpdfstring{$f$}{f}-Balanced Independent Set}\label{sub:independent_hardness}
	In a 3-bounded \fSAT{3} formula, each variable is allowed to appear in at most three clauses and clauses have two or three literals.
	The 3-bounded \fSAT{3} problem is \NP-complete~\cite{DBLP:journals/dam/Tovey84}.
This section is devoted to showing \NP-completeness of \textsc{$1$-BIS}\xspace by giving a reduction from 3-bounded \fSAT{3}.

\begin{lemma}
    \label{lem:1Reduction}
      $3$-bounded \fSAT{3}~$\le_p$ \textsc{$1$-BIS}. 
\end{lemma}
\begin{proof}
Let $\phi$ be a 3-bounded \fSAT{3} formula with
	variables $x_1,\ldots,x_n$ and clause set $\mathcal C = \{C_1,\ldots,C_m\}$. 
	From $\phi$ we now construct an equivalent instance of \textsc{$1$-BIS} consisting of a proper interval graph $G = (V,E)$ and 
	a color assignment $\coloring$ of $V$.
	We choose the set of colors to contain exactly $m$ colors,
	one for each clause in $\mathcal C$ and we number these colors from $1$ to $m$.
	We add a vertex $u_{i,j} \in V$ for each occurrence of 
	a variable $x_i$ in a clause $C_j$ in $\phi$.
	Furthermore, we insert an edge $\{u_{i,j}, u_{i,j'}\} \in E$ whenever $u_{i,j}$ was inserted because of a positive occurrence of $x_i$ and
	$u_{i,j'}$ was inserted because of a negative occurrence of $x_i$.
	Finally, we color each vertex $u_{i,j} \in V$ with color $j$.
	See Figure~\ref{fig:reduction_indset} for an illustration.
	The graph $G$ created from $\phi$ is a proper interval graph as it consists only of disjoint paths of length at most three. It can clearly be constructed in polynomial time and space.	
	\begin{figure}[htbp]
		\centering
		\includegraphics{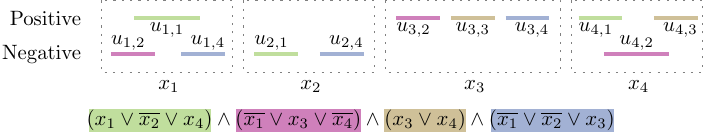}
		\caption{The constructed \textsc{$1$-BIS} instance depicted as interval representation with the vertex colors being the colors of the intervals. 
		}
		\label{fig:reduction_indset}
	\end{figure}

	It remains to prove that $\phi$ is satisfiable if and only if~$G = (V,E)$ has a \textsc{$1$-BIS} with respect to the color assignment $\coloring$.

	Suppose $V' \subseteq V$ is a $1$-balanced independent set  of $G$. 
	We construct a variable assignment for $x_1,\ldots,x_n$ as follows.
	By definition we find for each color $j$ precisely one vertex $u_{i,j} \in V'$.
	If $u_{i,j}$ was inserted for a positive occurrence of $x_i$, then we set $x_i$ to \emph{true} and otherwise $x_i$ to \emph{false}.
	Moreover, all variables $x_i$ with $i \in \{1,\ldots,n\}$ 
	for which we do not find a corresponding interval in $V'$ are also set to $\emph{false}$.
	Since $V'$ is an independent set in $G$ this assignment 
	is well defined.
	Now assume it was not satisfying, 
	then there exists a clause $C_j$ for which none of its literals evaluates to \emph{true}.
	Hence, none of the at most three vertices corresponding to the literals in $C_j$ is in $V'$.
	Recall that there is a one-to-one correspondence between clauses and colors in the instance of $1$-balanced independent set we created.
	Yet, $V'$ does not contain a vertex of that color, a contradiction.

		For the opposite direction assume we are given a satisfying assignment of the 
		3-bounded \fSAT{3}\ formula $\phi(x_1,\ldots,x_n)$ with clauses $\mathcal C = \{C_1,\ldots,C_m\}$.
		Furthermore let $ G = (V,E) $ be the graph with a color assignment of the vertices $\gamma$
		constructed from $\phi$ as described above.
		We find a $1$-balanced independent set of $G$ from the given assignment as follows.
		For each clause $C_j \in \mathcal C$ we choose one of its literals that evaluates to \emph{true} and
		add the corresponding vertex $v \in V$ to the set of vertices $V'$.
		Since there is a one-to-one correspondence between the colors and the clauses and
		the assignment is satisfying, $V'$ clearly contains one vertex per color.
		It remains to show that $V'$ is an independent set of vertices in $G$.
		Assume for contradiction that there are two vertices $v_{i,j}, v_{i',j'} \in V'$ and
		$\{v_{i,j},v_{i',j'}\} \in E$. Then, by construction of $G$, we know that $i = i'$ and further that
		$v_{i,j}, v_{i',j'}$ correspond to one positive and one negative occurrence of $x_i$ in $\phi$.
		By the construction of $V'$ this implies a contradiction to the assignment being consistent.
\end{proof}		
\new{
Evidently \textsc{$1$-BIS} is in the class $\NP$.
Together with Lemma~\ref{lem:1Reduction}, this implies the following theorem.}
	\begin{restatable}{theorem}{thmreductionindset}
		\label{thm:reduction_indset}
		The $f$-balanced independent set problem on a graph $G = (V,E)$ with a color assignment of the vertices $\coloring \colon V \rightarrow \{1,\ldots,k\}$ is \NP-complete, even if $G$ is a proper interval graph and $f=1$.
	\end{restatable}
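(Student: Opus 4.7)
The plan is to confirm the reduction sketched above yields an equivalence between satisfiability of $\phi$ and the existence of a $1$-balanced independent set in $G$. Membership in \NP\ is immediate: given a candidate $S\subseteq V$, we verify in polynomial time that $S$ is independent and that each color class contributes exactly one vertex. The excerpt already argues that $G$ is a proper interval graph (its connected components are paths on at most three vertices, one per variable, since $x_i$ appears in at most three clauses in the 3-bounded instance) and that the construction is polynomial, so the remaining task is to prove correctness of the reduction.

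For the forward direction, I would start from a satisfying truth assignment $\tau$ of $\phi$. For each clause $C_j$, fix one literal of $C_j$ that is satisfied by $\tau$, say $\ell_{i_j}\in\{x_{i_j},\overline{x_{i_j}}\}$, and put the corresponding vertex $u_{i_j,j}$ into $S$. The set $S$ meets every color class exactly once by construction, so it is $1$-balanced. For independence, note that the only edges of $G$ connect a positive occurrence $u_{i,j}$ of some variable $x_i$ with a negative occurrence $u_{i,j'}$ of the same $x_i$; but $\tau$ cannot satisfy both a positive and a negative literal of $x_i$, so $S$ contains at most one vertex from each such pair and is therefore independent.

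For the backward direction, let $S$ be a $1$-balanced independent set in $G$. Because $S$ contains exactly one vertex per color (i.e.\ per clause), for each clause $C_j$ there is a unique $u_{i_j,j}\in S$ which we interpret as a chosen satisfying literal of $C_j$. To turn this into a truth assignment, I set $x_i$ to \emph{true} if some $u_{i,j}\in S$ corresponds to a positive occurrence of $x_i$, to \emph{false} if some $u_{i,j}\in S$ corresponds to a negative occurrence of $x_i$, and arbitrarily otherwise. The key point to check is consistency: if $u_{i,j}$ and $u_{i,j'}$ are both in $S$ with opposite polarities, then by construction $\{u_{i,j},u_{i,j'}\}\in E$, contradicting independence of $S$. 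Hence the assignment is well-defined, and by construction every clause $C_j$ is satisfied by the literal corresponding to $u_{i_j,j}$.

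I expect no serious obstacle: the heart of the argument is the observation that the only edges of $G$ encode exactly the Boolean inconsistency between opposite literals of the same variable, so independence translates directly into assignment consistency. The mildly subtle point to articulate carefully is that $1$-balancedness forces $|S|=m$, which is what couples the independent set to a literal choice in \emph{every} clause and thus to satisfaction of $\phi$, rather than merely to some partial assignment.
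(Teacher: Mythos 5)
Your proposal is correct and follows essentially the same reduction and correctness argument as the paper; the only cosmetic differences are that you present the two directions in the opposite order and argue clause satisfaction directly rather than by contradiction. The paper additionally fixes the assignment of untouched variables to \emph{false} where you leave it arbitrary, but either choice works.
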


\begin{remark}
\new{
Since the constructed graph $G$ consists of a set of disjoint paths of length at most three, 
it is easy to construct a set $I$ of right-open intervals of length $2$ such that $G$ represents the intersections of $I$.
Moreover, each color class consists of at most three intervals. 
Therefore, we independently prove the $\NP$-completeness of the \textsc{Job Interval Scheduling Problem} even when each interval has length $2$ and at most three intervals are of the same color, as stated in~\cite{spieksma1992complexity}.
}
\end{remark}

	\subsection{Complexity of \texorpdfstring{$f$}{f}-Balanced Dominating Set}\label{dominating-bis}
	
	In a \textsc{2P2N}-\fSAT{3} formula, each variable appears exactly twice positively and twice negatively. 
	The \textsc{2P2N}-\fSAT{3} problem is \NP-complete~\cite{DBLP:conf/rta/Yoshinaka05}.
	In this section, we show \NP-hardness of \textsc{$1$-BDS} by a reduction from \textsc{2P2N}-\fSAT{3}.
	
\subparagraph{Construction}	
	Let $\phi$ be a \textsc{2P2N}-\fSAT{3} formula with
	variables $x_1,\ldots,x_n$ and clause set $\mathcal C = \{C_1,\ldots,C_m\}$.
	For variable $x_i$ in $\phi$ we denote with $\mathcal C_{x_i} = \{C_t^1, C_t^2, C_f^1, C_f^2\}$
	the four clauses $x_i$ appears in, where $C_t^1,C_t^2$ are clauses with positive occurrences of $x_i$ and $C_f^1,C_f^2$ are clauses containing negative occurrences of $x_i$.
	We construct a vertex-colored graph $G = (V,E)$ from 
	$\phi$ as follows.
	For each variable $x_i$ we introduce six vertices $t_1,t_2,f_1,f_2,h_t,$ and $h_f$ and
	for each clause $C_j$ with occurrences of variables $x_{j_1}$, $x_{j_2}$, and $x_{j_3}$ 
	we add up to three vertices $c_k$ for each $k \in \{j_1,j_2,j_3\}$. (In case a clause has less than three literals we add only one or two vertices.)
	If the connection to the variable is clear, we also write $c_t^1$, $c_t^2$, $c_f^1$, and $c_f^2$
	for the vertices introduced for this variable's occurrences in the clauses 
	$C_t^1, C_t^2, C_f^1,$ and $C_f^2$, respectively.
	Furthermore, we add for each variable $x_i$ the edges 
	$\{h_t,t_1\}$, $\{h_t,t_2\}$, $\{h_f, f_1\}$, and $\{h_f,f_2\}$, as well as
	for each clause $C_j$ all possible edges between the three vertices introduced for $C_j$.
	For each variable $x_i$ we introduce five colors, namely
	$z_t^1$, $z_t^2$, $z_f^1$, $z_f^2$, and $z_h$.
	We set $\coloring(h_t) = \coloring(h_f) = z_h$.
	Finally, we set $\coloring(t_1) = \coloring(c_t^1) = z_t^1$.
	Equivalently for $t_2$, $f_1$, and $f_2$.
	See Figure~\ref{fig:reduction_domset} for an example.
	In total we create $6n + 3m$ many vertices and $4n + 3m$ many edges, thus the reduction is polynomial. 
	All variable and clause gadgets are independent components and 
	only consist of paths of length three and triangles, hence $G$ is a proper interval graph.
		\begin{figure}[htbp]
		\centering
		\includegraphics{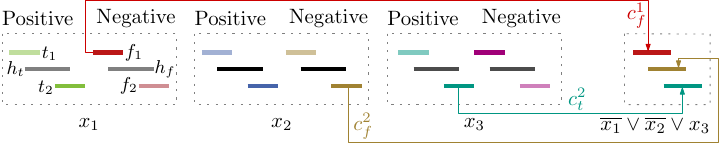}
		\caption{Illustrations of three variable gadgets and a clause gadget as interval representations.}
		\label{fig:reduction_domset}
	\end{figure}

	To establish the correctness of our reduction for 1-BDS we first introduce a canonical type of 
	solutions for the graphs produced by our reduction.
	We call $V_D$ \emph{canonical}, if for each variable $x_i$ we either find 
	$\{h_t,f_1,f_2,c_t^1,c_t^2\} \subset V_D$ or $\{h_f,t_1,t_2,c_f^1,c_f^2\} \subset V_D$. 
	If for a variable $x \in X$ and a $1$-balanced dominating set $V_D \subseteq V$ we find one of the two above sets in $V_D$, we say $x$ is in \emph{canonical form} in $V_D$. 
	The next lemma shows that if $G$ has a $1$-balanced dominating set we can turn it into a canonical one.
	
	\begin{restatable}{lemma}{lemcanonical}
		
		\label{lem:canonical}
		Let $G = (V,E)$ be a graph generated from a \textsc{2P2N}-\fSAT{3} formula $\phi$ with clause set $\mathcal C = \{C_1,\ldots,C_m\}$ as above and $V_D \subseteq V$ a $1$-balanced dominating set, then $V_D$ can be transformed into a canonical $1$-balanced dominating set in $O(|V|)$ time.
	\end{restatable}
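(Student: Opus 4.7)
My plan is to read off a truth assignment of $x_1,\ldots,x_n$ from $V_D$, show that this assignment satisfies $\phi$, and then rebuild $V_D$ as the canonical set dictated by that assignment.

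For each variable $x_i$, the color $z_h$ appears only on $h_t$ and $h_f$, so the $1$-balanced condition forces exactly one of them into $V_D$; I would set $x_i$ to $\emph{true}$ if $h_t\in V_D$ and to $\emph{false}$ otherwise. To see that this assignment satisfies every clause, I use that each clause component is a clique of size at most three with no edges to the rest of $G$, and hence every clause triangle must contain at least one vertex of $V_D$. Pick such a vertex $c$ in a clause $C_j$. If $c=c_t^k$ corresponds to a positive occurrence of some $x_i$, then the color $z_t^k$ is carried by exactly $t_k$ and $c_t^k$, so $c_t^k\in V_D$ forces $t_k\notin V_D$; since $h_t$ is the only neighbor of $t_k$ in $G$, dominating $t_k$ forces $h_t\in V_D$, so $x_i$ is true under our assignment and the literal $x_i$ satisfies $C_j$. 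The symmetric argument via $f_k,c_f^k,h_f$ handles negative occurrences, so every clause of $\phi$ is satisfied.

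From the resulting satisfying assignment I would then build $V_D'$ by inserting $\{h_t,f_1,f_2,c_t^1,c_t^2\}$ for each true variable and $\{h_f,t_1,t_2,c_f^1,c_f^2\}$ for each false variable. Being $1$-balanced is immediate, because each of the five per-variable colors $z_t^1,z_t^2,z_f^1,z_f^2,z_h$ is hit exactly once in each variable gadget. For domination inside a variable gadget, the selected $h_{\bullet}$ dominates its two pendants, and the two pendants on the opposite branch lie in $V_D'$ themselves. For a clause vertex $c$ representing a literal $\ell$, either $\ell$ is true and $c\in V_D'$ directly, or $\ell$ is false and, by the previous paragraph, some other literal $\ell'$ of the same clause is true, so $c_{\ell'}\in V_D'$ dominates $c$ via the triangle edge. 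The transformation is a single pass over the variable and clause gadgets with constant work per gadget, giving the claimed $O(|V|)$ time bound.

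The step I expect to be the main hurdle is verifying that the derived assignment satisfies $\phi$; without that observation, a purely local exchange such as swapping $c_f^k$ for $f_k$ whenever $h_t\in V_D$ can remove the only $V_D$-vertex from the triangle containing $c_f^k$ and destroy domination of that clause. Rebuilding $V_D'$ globally from the assignment, rather than patching one vertex at a time, is precisely what sidesteps this issue, because the argument above guarantees that every clause keeps at least one canonical vertex in $V_D'$.
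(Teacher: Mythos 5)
Your proof is correct, but it takes a genuinely different route from the paper's. The paper's argument is a purely local exchange: if $h_t\in V_D$ for a variable $x$, then $1$-balancedness forces $h_f\notin V_D$, so $f_1,f_2$ can only be dominated by themselves, hence $f_1,f_2\in V_D$ and (by the colors $z_f^1,z_f^2$) $c_f^1,c_f^2\notin V_D$; the only repair needed is therefore on the $t$-side, and it consists of deleting $t_1,t_2$ (still dominated by $h_t$) and inserting $c_t^1,c_t^2$. Since this exchange only ever \emph{adds} clause vertices to $V_D$, it cannot destroy domination of any clause triangle -- so the danger you describe in your last paragraph (removing the only $V_D$-vertex from a clause triangle) never arises, and your claim that a local exchange "can destroy domination" mischaracterizes that route; the forced structure on the $f$-side is exactly the observation you missed. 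Your alternative instead derives a satisfying assignment from $V_D$ (every clause component, being a separate clique, must meet $V_D$; a clause vertex in $V_D$ forces the corresponding $h$-vertex into $V_D$ via the color and domination constraints) and then rebuilds the canonical set globally. This is sound and self-contained, but it effectively imports both directions of the correctness proof of Theorem~\ref{thm:reduction_domset} into the lemma, whereas the paper keeps the lemma as a lightweight normalization step and defers the satisfiability argument to the theorem. Both yield the $O(|V|)$ bound.
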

	\begin{proof}
		Let $x$ be not in canonical form in $V_D$. Since $V_D$ is a $1$-balanced dominating set we know that either $h_t$ or $h_f$ of $x$ is in $V_D$. 
		Without loss of generality assume that $h_t \in V_D$.
		Consequently, we find that $f_1,f_2 \in V_D$ and $c_f^1,c_f^2\not\in V_D$.
		Now, we obtain the set $V_D'$ from $V_D$ by 
		removing any occurrence of $t_1$ or $t_2$ from $V_D$ and
		inserting all missing elements of $\{c_t^1,c_t^2\}$.

		Clearly $x$ is in canonical form in $V_D'$.
		We need to show that $V_D'$ is still a $1$-balanced dominating set.
		It is straight forward to verify that every color appears exactly once in $V_D'$ 
		if $V_D$ was $1$-balanced.
		Now assume there was a vertex $u \in V$ that is not dominated by any vertex in $V_D'$.
		Yet, we at most deleted $t_1$ and $t_2$ in $V_D'$ but since $h_t \in V_D'$ 
		both and all their neighbors are dominated.
		As our operations only affected vertices introduced for $x$ and occurrences of $x$
		we can simply iterate this process for each variable until every variable $x_i$ is in canonical form.
	\end{proof}

	\begin{restatable}{theorem}{thmreductiondomset}
		
		\label{thm:reduction_domset}
		The $f$-balanced dominating set problem on a graph $G = (V,E)$ with a color assignment of the vertices $\coloring : V \rightarrow \{1,\ldots,k\}$ is \NP-complete, even if $G$ is a proper interval graph and $f=1$.
	\end{restatable}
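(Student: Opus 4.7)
The plan is to complete the \NP-hardness reduction from 2P2N-3\SAT\ whose construction has already been described and whose output has already been shown to be a proper interval graph. Membership in \NP\ is immediate, since for a candidate $V_D \subseteq V$ one can check in linear time that $V_D$ dominates $V$ and contains exactly one vertex of each color. The core task is to prove the equivalence between satisfying assignments of $\phi$ and $1$-balanced dominating sets of the graph $G$ produced by the reduction, and I would leverage Lemma~\ref{lem:canonical} throughout so that it suffices to deal with canonical solutions.

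For the forward direction, I would fix a satisfying assignment of $\phi$ and assemble a canonical $V_D$ as follows: for each variable $x_i$, include $\{h_t, f_1, f_2, c_t^1, c_t^2\}$ if $x_i$ is true and $\{h_f, t_1, t_2, c_f^1, c_f^2\}$ otherwise. Balancedness is immediate, since each of the five colors $z_t^1, z_t^2, z_f^1, z_f^2, z_h$ attached to $x_i$ is represented exactly once, and different variables introduce disjoint color classes. The variable gadget for $x_i$ is dominated because $h_t$ (resp.\ $h_f$) covers $t_1, t_2$ (resp.\ $f_1, f_2$) while the remaining $f$- (resp.\ $t$-) vertices are themselves in $V_D$. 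For a clause $C_j$, satisfiability guarantees some literal of $C_j$ is true, and the corresponding clause-gadget vertex is exactly one of the $c_t^l$ or $c_f^l$ placed in $V_D$; since the clause gadget is a triangle (or an edge when $C_j$ has two literals), that single vertex dominates the entire gadget.

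For the backward direction, I would take any $1$-balanced dominating set $V_D$ and, by Lemma~\ref{lem:canonical}, assume without loss of generality that it is canonical. Setting $x_i$ to \emph{true} exactly when $h_t \in V_D$ yields a well-defined assignment. For any clause $C_j$, its gadget is an isolated triangle (or edge) in $G$, so at least one of its vertices must lie in $V_D$ in order to be dominated. That vertex has the form $c_t^l$ or $c_f^l$ for some variable $x_i$ occurring in $C_j$, and by the shape of the canonical solution for $x_i$ its presence in $V_D$ forces $h_t$ or $h_f$ respectively into $V_D$. Thus $x_i$ has been assigned the value that satisfies the corresponding literal of $C_j$, so $C_j$ is satisfied, and hence so is $\phi$.

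The main obstacle, namely reducing an arbitrary $1$-balanced dominating set to a structured one, is already dispatched by Lemma~\ref{lem:canonical}; once canonical form is available the remaining argument is essentially bookkeeping. The point that needs the most care is the color sharing between variable and clause gadgets: because $t_1$ and $c_t^1$ share the color $z_t^1$ (and analogously for $t_2, f_1, f_2$), a $1$-balanced set must pick exactly one vertex from each such pair, and the canonical form forces this choice to be consistent with the chosen $h$-vertex. This color sharing is precisely what couples literal satisfaction to clause-gadget domination and makes the reduction work already for $f = 1$.
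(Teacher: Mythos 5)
Your proposal is correct and follows essentially the same route as the paper's proof: both directions use the same canonical construction, invoke Lemma~\ref{lem:canonical} to normalize an arbitrary solution, and exploit the color sharing between $t_\ell/f_\ell$ and $c_t^\ell/c_f^\ell$ together with the isolation of each clause gadget. The only cosmetic difference is that you argue the satisfiability direction directly (each clause gadget must contain a dominating vertex) where the paper phrases it as a contradiction, which is logically the same argument.
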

	\begin{proof}
		
		Let $G = (V,E)$ be constructed from a \textsc{2P2N}-\fSAT{3} formula $\phi$ with clause set $\mathcal C = \{C_1,\ldots,C_m\}$ as above and 
		let $V_D$ be a $1$-balanced dominating set of $G$. 
		By Lemma~\ref{lem:canonical} we can assume $V_D$ is canonical. 
		We construct an assignment of the variables in $\phi$ by 
		setting $x_i$ to \emph{true} if its $h_t \in V_D$ and to \emph{false} otherwise. 
		Assume this assignment was not satisfying, i.e., 
		there exists a clause $C_j \in \mathcal C$ such that none of the literals in $C_j$
		evaluates to \emph{true}.
		For each positive literal of $C_j$ we then get that the corresponding variable $x_i$
		was set to \emph{false}.
		Hence, $h_f \in V_D$ for $x_i$ and consequently $c_t^1,c_t^2\not\in V_D$.
		Equivalently for each negative literal we find $h_t \in V_D$ and $c_f^1,c_f^2 \not\in V_D$.
		As a result we find that none of the vertices introduced for literals in $C_j$ is in $V_D$ and
		especially that none of them is dominated as they are each others only neighbors.
		Yet, $V_D$ is a $1$-balanced dominating set by assumption, a contradiction.
		
		In the other direction, assume we are given a satisfying assignment of a \textsc{2P2N}-\fSAT{3} formula 
		$\phi$ with clause set $\mathcal C = \{C_1,\ldots,C_m\}$.
		Furthermore, let $G = (V,E)$ be the graph constructed from $\phi$ as above.
		We form a canonical $1$-balanced dominating set $V_D \subseteq V$ of $G$ in the following way. 
		For every variable $x_i$ that is set to \emph{true} in the assignment 
		we add $\{h_t,f_1,f_2,c_t^1,c_t^2\}$ to $V_D$ and 
		for every variable $x_{i'}$ that is set to \emph{false} we add $\{h_f,t_1,t_2,c_f^1,c_f^2\}$. 
		This clearly is a $1$-balanced set and it is canonical. 
		It remains to argue that it dominates $G$. 
		For the vertices introduced for variables this is clear, 
		since we pick either $h_t$ or $h_f$, as well as $f_1,f_2$ or $t_1,t_2$ for every variable $x_i$. 
		Now, assume there was a clause $C_j \in \mathcal C$ and none of the vertices introduced for literals in $C_j$ was in $V_D$. 
		Then, by construction of $V_D$, we find that 
		for any positive (negative) occurrence of a variable $x_i$ in $C_j$ 
		the variable $x_i$ was set to \emph{false} (\emph{true}).
		A contradiction to the assignment being satisfying.
	\end{proof}

	\section{Algorithmic Results for the Balanced Independent Set }\label{sec:bis_algorithms}
	
	
	In this section, we take a parameterized perspective on \fBIS and provide two \FPT\ algorithms with different parameters.
	The algorithms described in this section can be easily generalized to maximize the value of $f$ in \fBIS.

	\subsection{An FPT Algorithm For Interval Graphs Parameterized by (\texorpdfstring{$f, k$}{f,k})}\label{xp-bis} 
	Assume we are given an instance of \fBIS with
	$G=(V,E)$ being an interval graph with a color assignment of the vertices  $\coloring \colon V \rightarrow \{1,\ldots,k\}$.
	We can construct an interval representation $\mathcal I = \{I_1,\ldots,I_n\}$, $n = |V|$,
	from $G$ in linear time \cite{Lekkeikerker1962}. 
	Our algorithm is a dynamic programming based procedure that work as follows. 
	
	Firstly, we sort the right end-points of the $n$ intervals in $\mathcal I$ in ascending order.
	Next, we define a function $ prev \colon \mathcal I \rightarrow \{1,\ldots,n\}$ such that for each interval $I_{i} \in \mathcal I$, $prev(I_i)$ is 
	the index of the rightmost interval with its
	right endpoint left of $I_i$'s left endpoint.
	If no such interval exists for some interval $I_i$, we set $prev(I_i) = 0$.

	For each $\kappa\in \{1,\ldots,k\}$, let 
	$\hat{e}_{\kappa}$ denote the $k$-dimensional 
	unit vector of the form $(0, \ldots, 0, 1, 0, \ldots, 0)$, 
	where the element at the $\kappa$-th position is $1$ and the rest are $0$.
	For a subset $\mathcal I' \subseteq \mathcal I$ we define a \emph{cardinality vector} as the 
	$k$-dimensional vector $C_{\mathcal I'} = (c_1,\ldots,c_k)$, 
	where each element $c_i$ represents 
	the number of intervals of color $i$ in $\mathcal I'$. 
	We say $C_{\mathcal I'}$ is \emph{valid} if all $c_i \leq f$ and
	the set $\mathcal I'$ is independent.
	
	The key observation here is that there are at most $O((f+1)^k)$ many different valid cardinality vectors as there are only $k$ colors and we are interested in at most $f$ intervals per color.
	In the following let $U_j$, $j\in \{1,\ldots,n\}$, be the union of 
	all valid cardinality vectors of the first $j$ intervals in $\mathcal I$.
    \new{Furthermore, we store for each $u \in U_{j}$ one representative interval set with $u$ as its cardinality vector.}

	Let $U_0 = \{(0,\ldots,0)\}$ in the beginning.
\new{	To compute an $f$-balanced independent set the algorithm simply iterates over all right endpoints of the intervals in $\mathcal I$ and in the $i$-th step computes $U_i$ as the set consisting of all valid cardinality vectors of the union $\{ u + \hat{e}_{\coloring(I_i)} \mid u\in U_{prev(I_i)}\} \cup U_{i-1}$.
	Correspondingly, we update the representative interval sets for $U_{i}$.}
	Finally, we check the cardinality vectors in $U_n$ and return \emph{true} in case there is one cardinality vector $w \in U_n$
	with entries being all $f$ and \emph{false} otherwise. Moreover, the representative interval set of $w$ is an $f$-balanced independent set.

	\begin{restatable}{theorem}{thmxpalgo}\label{xp-algo}
		Let $G=(V,E)$ be an interval graph with a color assignment of the vertices  $\coloring : V \rightarrow \{1,\ldots,k\}$.
		We can compute an $f$-balanced independent set of $G$ or
		determine that no such set exists in $O(n \log n + k (f+1)^k n)$ time. 
	\end{restatable}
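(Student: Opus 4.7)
The plan is to establish correctness of the dynamic program by induction on $i$ (using an interval-scheduling style exchange argument), and then to read off the runtime from the bound $|U_i|\le (f+1)^k$ on the number of valid cardinality vectors. I assume that the intervals $I_1,\ldots,I_n$ have been sorted in ascending order of right endpoint, which takes $O(n\log n)$, and that the function $prev$ has been precomputed in $O(n)$ (e.g., by binary search during the sorting pass).

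First I would record a simple structural fact that drives the recurrence: if $S\subseteq\{I_1,\ldots,I_i\}$ is an independent set containing $I_i$, then every other interval of $S$ has right endpoint at most the left endpoint of $I_i$, and therefore lies in $\{I_1,\ldots,I_{prev(I_i)}\}$; conversely, any independent set $S'\subseteq\{I_1,\ldots,I_{prev(I_i)}\}$ can be extended to an independent set $S'\cup\{I_i\}$. This immediately justifies the two cases of the update rule: vectors $u\in U_{prev(I_i)}$ with $u+\hat e_{\coloring(I_i)}$ still valid correspond to sets that use $I_i$, while $U_{i-1}$ covers the case where $I_i$ is not used.

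Next I would prove the main invariant by induction on $i$: at the end of step $i$, the set $U_i$ equals exactly the set of cardinality vectors $v$ with $v_\kappa\le f$ for all $\kappa$ that are realized by some independent subset of $\{I_1,\ldots,I_i\}$. The base case $U_0=\{(0,\ldots,0)\}$ is immediate, and the inductive step is a direct application of the structural fact above, combined with the observation that it suffices to store one representative independent set per cardinality vector: whether such a representative can be extended by $I_i$ depends only on whether it is contained in $\{I_1,\ldots,I_{prev(I_i)}\}$, which is a property of the index set $U_{prev(I_i)}$ and not of the specific representative chosen. Applying the invariant to $i=n$ shows that an $f$-balanced independent set exists if and only if $(f,\ldots,f)\in U_n$, and the stored representative provides a witness.

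The main subtlety I want to be careful about is precisely this representative choice: different independent sets with the same cardinality vector could in principle behave differently with respect to future intervals, but since we only need to know whether \emph{some} independent set with a given cardinality vector lives inside $\{I_1,\ldots,I_{prev(I_i)}\}$, any single witness contained in that prefix is sufficient. For the running time, the bound $|U_i|\le (f+1)^k$ follows from the fact that each coordinate of a valid vector lies in $\{0,1,\ldots,f\}$; computing $U_i$ from $U_{prev(I_i)}$ and $U_{i-1}$ takes $O(k(f+1)^k)$ work (one vector addition and validity check per entry), which summed over all $n$ intervals and added to the $O(n\log n)$ preprocessing gives the claimed $O(n\log n+k(f+1)^k n)$ bound.
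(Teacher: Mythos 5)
Your proposal is correct and follows essentially the same route as the paper: the same dynamic program over right-endpoint-sorted intervals, the same key structural fact that an independent set containing $I_i$ has its remaining intervals inside the prefix $\{I_1,\ldots,I_{prev(I_i)}\}$, and the same counting of at most $(f+1)^k$ valid cardinality vectors for the runtime. The only cosmetic difference is that you phrase correctness as a two-directional induction invariant ($U_i$ equals exactly the realized valid vectors) while the paper argues the completeness direction by contradiction; your explicit remark that the choice of representative is immaterial is a welcome clarification of a point the paper leaves implicit.
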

	\begin{proof}
		Let $\mathcal I = \{I_1,\ldots,I_n\}$ be an interval representation of $G$ on which we execute our algorithm.
		For $U_0$ the set just contains the valid cardinality vector with all zeros
		which is clearly correct.
		Let $U_{i-1}$ be the set of valid cardinality vectors computed after step $i - 1$.
		Now, in step $i \leq n$ we calculate the set $U_i$ as the union of $U_{i-1}$ and
		the potential new solutions based on independent sets of intervals containing $I_i$.
		Assume $\mathcal I_x \subseteq \{I_1,\ldots,I_i\}$ is an independent set of intervals such that 
		its cardinality vector $C_{\mathcal I_x}$ is valid, but there is
		no valid cardinality vector $C_{\mathcal I'} \in U_i$ such that $C_{\mathcal I'}$ is larger or equal in every component than $C_{\mathcal I_x}$.
		Since $U_{i-1}$ contained all valid cardinality vectors for the intervals in $\{I_1,\ldots,I_{i-1}\}$ 
		we know that $C_{\mathcal I_x}$ is such that $I_i \in \mathcal I_x$.
		Yet, the set $U_{prev(I_i)}$ contained all valid cardinality vectors 
		for the set of intervals $\{I_1,\ldots,I_{prev(I_i)}\}$.
		Since $I_i$ has overlaps with all intervals in \{$I_{prev(I_i) + 1},\ldots,I_{i-1}\}$ and 
		hence cannot be in any independent set with any such interval we can conclude that
		$C_{\mathcal I_x} - \hat{e}_{\gamma(I_i)} \in U_{prev(I_i)}$.
		Thus, we also find $C_{\mathcal I_x} \in U_i$, a contradiction.
		
		Next we consider the running time. 
		The key observation is that there are at most $(f+1)^k$ different valid cardinality vectors.
		Checking the validity can be done in $O(1)$ time for each new vector as only one entry changes.
		Computing the sets $U_i$ can be done in time $O(k(f+1)^k)$,
		by storing the cardinality vectors in lexicographically sorted order for each set.
		Keeping the sets in sorted order does not require any extra running time, 
		as $U_0$ is clearly sorted in the beginning (it only contains one element) and
		we only increase the same entry for each vector in $U_{prev(I_i)}$ when forming the union, 
		thus not changing their ordering.
		Hence, the set $U_{prev(I_i)}$  and $U_{i-1}$ can be assumed to be sorted in lexicographic order.
		Consequently, by merging from smallest to largest element the set $U_i$ is again lexicographically sorted after the union.
		Furthermore, we can easily discard double entries by comparing also against the vector we inserted last into $U_i$.
		Finally, we have to sort the intervals themselves.
		Using standard sorting algorithms this works in $O(n\log n)$ time.
		Altogether, this results in a running time of $O(n \log n +   k (f+1)^k n)$.
	\end{proof}

	\subsection{An FPT Algorithm Parameterized by the Vertex Cover Number}\label{fpt-bis}
	Here we will give an alternative \FPT\ algorithm for   \fBIS, this time parameterized by the \textit{vertex cover number} $\tau(G)$ of $G$, i.e., the size of a minimum vertex cover of $G$.
	\new{This alternative algorithm works on general graphs, not only on interval graphs.
	In the following, for a vertex set $S \subseteq V$, $N(S)$ denotes the neighborhood of $S$ consisting of all vertices adjacent to a vertex of $S$. }
	
	\begin{restatable}{lemma}{vertexcoverswap}
		\label{lem:vertex cover swap}
		Let $G = (V,E)$ be a graph. Consider a vertex cover $V_c$ in $G$ and its complement $V_{\mathrm{ind}}=V \setminus V_c$. Then any maximal independent set $M$ of $G$ can be constructed from $V_{\mathrm{ind}}$ by adding the subset $M\cap V_c$ of $V_c$  and removing its neighborhood in $V_{\mathrm{ind}}$, namely $M=(V_{\mathrm{ind}} \cup(M\cap V_c)) \setminus N(M\cap V_c)$.
	\end{restatable}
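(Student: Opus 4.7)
The plan is to prove the set equality $M = (V_{\mathrm{ind}}\cup(M\cap V_c))\setminus N(M\cap V_c)$ by double inclusion. A key preliminary observation I would state first is that $V_{\mathrm{ind}}$ is itself an independent set: since $V_c$ is a vertex cover, every edge of $G$ has at least one endpoint in $V_c$, hence no edge can be contained entirely in $V\setminus V_c = V_{\mathrm{ind}}$.

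For the inclusion $M\subseteq (V_{\mathrm{ind}}\cup(M\cap V_c))\setminus N(M\cap V_c)$, I would pick an arbitrary $v\in M$ and split on whether $v\in V_c$ or not. In the first case $v\in M\cap V_c$, and in the second case $v\in V_{\mathrm{ind}}$; either way $v$ belongs to the union $V_{\mathrm{ind}}\cup(M\cap V_c)$. Moreover, since $M$ is an independent set and $M\cap V_c\subseteq M$, the vertex $v$ cannot be adjacent to any element of $M\cap V_c$, so $v\notin N(M\cap V_c)$.

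For the reverse inclusion, I would take $v$ in the right-hand side and again split cases. If $v\in M\cap V_c$, then trivially $v\in M$. Otherwise $v\in V_{\mathrm{ind}}\setminus N(M\cap V_c)$. I would argue that $v$ has no neighbor in $M$ at all: any hypothetical neighbor $u\in M\cap V_c$ is excluded by the condition $v\notin N(M\cap V_c)$, and any neighbor $u\in M\cap V_{\mathrm{ind}}$ would form an edge inside $V_{\mathrm{ind}}$, contradicting the preliminary observation. Then maximality of $M$ forces $v\in M$, since a maximal independent set must contain every vertex that has no neighbor in it.

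The main obstacle, if any, is the last step: one has to use \emph{maximality} of $M$ (not just independence) to conclude that $v\in M$ in the reverse inclusion, and one must be careful that ``maximal'' here means inclusion-maximal rather than maximum. All other steps reduce to unpacking the definitions of vertex cover, independent set, and neighborhood, so the proof should be short and essentially free of calculation.
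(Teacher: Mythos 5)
Your proof is correct and follows essentially the same route as the paper: both arguments rest on the observations that $V_{\mathrm{ind}}$ is independent, that $M\cap V_c$ is independent, and that maximality of $M$ forces the reverse inclusion. The only cosmetic difference is that the paper first shows the right-hand side is an independent set containing $M$ and then invokes maximality once globally, whereas you verify membership element by element; both are sound.
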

	\begin{proof} 
		For a fixed but arbitrary maximal independent set $M$, in the following, we denote the set ($V_{\mathrm{ind}} \cup(M\cap V_c))\setminus N(M\cap V_c)$ as $M_{\mathrm{swap}}$.

		We first prove the independence of $M_{\mathrm{swap}}$. Note that by the definition of a vertex cover $V_{\mathrm{ind}}$ is an independent set. 
		Furthermore, the set $(M\cap V_c)$, as a subset of the independent set $M$, is also independent. 
		Then, in the union $V_{\mathrm{ind}} \cup(M\cap V_c)$ of these two independent sets, any adjacent pair of vertices must contain one vertex in $M\cap V_c$ and one in $V_{\mathrm{ind}}$. 
		Hence, after removing all the neighboring vertices of $M\cap V_c$, the set $M_{\mathrm{swap}}$ is independent.
		
		Next we prove that $M\subseteq M_{\mathrm{swap}}$. 
		Assume there exists one vertex $v_m$ in $M$ but not in $M_{\mathrm{swap}}$. 
		Since $v_m \in M$ it must also be in the set  $V_{\mathrm{ind}} \cup(M\cap V_c)$. 
		With the assumption that $v_m \notin M_{\mathrm{swap}}$, 
		we get that $v_m$ must be in $N(M\cap V_c)$. 
		Consequently, $v_m$ is in the independent set $M$ and is at the same time a neighbor of vertices in $M$, a contradiction.
		
		Finally we prove $M = M_{\mathrm{swap}}$. 
		We showed above that $M_{\mathrm{swap}}$ is an independent set and also $M\subseteq M_{\mathrm{swap}}$. 
		Since $M$ is a maximal independent set by assumption we get $M = M_{\mathrm{swap}}$.
	\end{proof}

	\begin{restatable}
		{lemma}{MISnumber}
		\label{lem:MIS number}
		Let $G = (V,E)$ be a graph with vertex cover number $\tau(G)$. There are $O(2^{\tau(G)})$ maximal independent sets of $G$.  
	\end{restatable}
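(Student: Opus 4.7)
The plan is to bound the number of maximal independent sets by showing that every such set is uniquely determined by its trace on a fixed minimum vertex cover, and then counting traces.

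First I would fix a minimum vertex cover $V_c$ of $G$ with $|V_c| = \tau(G)$ and let $V_{\mathrm{ind}} = V \setminus V_c$. For any maximal independent set $M$ of $G$, Lemma~\ref{lem:vertex cover swap} gives the identity $M = (V_{\mathrm{ind}} \cup (M \cap V_c)) \setminus N(M \cap V_c)$. The right-hand side depends only on $G$, the fixed set $V_c$, and the subset $M \cap V_c \subseteq V_c$. In particular, the map $M \mapsto M \cap V_c$ from the family of maximal independent sets of $G$ to the power set $2^{V_c}$ is injective: two maximal independent sets with the same trace on $V_c$ would reconstruct to the same set via the formula above.

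Next I would conclude the count. Since the image of this injection is contained in $2^{V_c}$, the number of maximal independent sets of $G$ is at most $2^{|V_c|} = 2^{\tau(G)}$, which is $O(2^{\tau(G)})$ as claimed.

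There is no real obstacle here beyond the careful invocation of Lemma~\ref{lem:vertex cover swap}; the whole argument is essentially a counting observation on top of the structural lemma. The only small point worth noting explicitly in the write-up is that not every subset of $V_c$ need arise as $M \cap V_c$ for some maximal independent set (so the bound may not be tight), which is why we state the conclusion as an upper bound rather than an equality.
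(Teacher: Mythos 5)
Your proposal is correct and follows the same route as the paper: fix a minimum vertex cover $V_c$, invoke Lemma~\ref{lem:vertex cover swap} to reconstruct each maximal independent set from its trace $M \cap V_c$, and count the at most $2^{\tau(G)}$ possible traces. You make the injectivity of $M \mapsto M \cap V_c$ explicit, which the paper leaves implicit, but the argument is the same.
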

	\begin{proof}
		Consider a minimum vertex cover $V_c$ in $G$ and its complement $V_{\mathrm{ind}}=V \setminus V_c$. Note that since $V_c$ is a (minimum) vertex cover, $V_{\mathrm{ind}}$ is a (maximum) independent set. Furthermore, any maximal independent set $M$ of $G$ can be constructed from $V_\mathrm{ind}$ by adding $M\cap V_c$ 
		and removing its neighborhood in $V_{\mathrm{ind}}$, namely $M=(V_{\mathrm{ind}} \cup(M\cap V_c)) \setminus N(M\cap V_c)$ by Lemma~\ref{lem:vertex cover swap}.
		Thus there are $O(2^{\tau(G)})$ maximal independent sets of  $G$. 
	\end{proof}

	\begin{theorem}
		\label{thm:fpt algorith}
		Let $G = (V,E)$ be an interval graph with a color assignment of the vertices  $\coloring \colon V \rightarrow \{1,\ldots,k\}$. We can compute an $f$-balanced independent set of $G$ or
		determine that no such set exists in $O(2^{\tau(G)}\cdot n)$ time.
	\end{theorem}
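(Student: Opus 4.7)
The plan is to reduce the search for an $f$-balanced independent set to an enumeration over candidate maximal independent sets of $G$ and a simple color-count check on each. The guiding observation is the equivalence: an $f$-balanced independent set exists in $G$ if and only if some maximal independent set $M$ contains at least $f$ vertices of every color. The forward direction is immediate, since any independent set extends to a maximal one. For the converse, whenever a maximal $M$ has at least $f$ vertices of every color, simply selecting $f$ of them per color yields an $f$-balanced independent set (as subsets of independent sets are independent).

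First, I would compute a minimum vertex cover $V_c$ of $G$ and set $V_{\mathrm{ind}}=V\setminus V_c$. Since $G$ is an interval graph, $V_c$ (equivalently, a maximum independent set $V_{\mathrm{ind}}$) can be found in $O(n\log n)$ time by the standard greedy-on-right-endpoints algorithm. By Lemma~\ref{lem:MIS number}, $G$ has at most $O(2^{\tau(G)})$ maximal independent sets, and by Lemma~\ref{lem:vertex cover swap} each such maximal independent set has the form $M_S=(V_{\mathrm{ind}}\cup S)\setminus N(S)$ for the independent subset $S=M\cap V_c$ of $V_c$.

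The algorithm then iterates over all $2^{\tau(G)}$ subsets $S\subseteq V_c$. For each $S$, I would first check in $O(\tau(G)^2)$ time whether $S$ is independent in $G$, skipping it otherwise. Given an independent $S$, I build $M_S$ and tally the number of vertices of each color it contains in $O(n)$ time by scanning $V_{\mathrm{ind}}$ and the (precomputed) neighborhoods of the vertices in $S$. If every one of the $k$ color counts is at least $f$, I output $f$ arbitrarily chosen vertices of each color from $M_S$ as the desired $f$-BIS; if no subset produces such an $M_S$, I report that none exists. Summing over the $2^{\tau(G)}$ subsets yields $O(2^{\tau(G)}\cdot n)$ total work, absorbing the initial $O(n\log n)$ preprocessing whenever $\tau(G)\ge \log\log n$, and trivially otherwise.

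The main conceptual step is the structural reduction to maximal independent sets enabled by Lemmas~\ref{lem:vertex cover swap} and~\ref{lem:MIS number}; once that is in place, the remaining work is essentially exhaustive enumeration with a linear-time feasibility test, and correctness follows from the extension-to-maximal argument above.
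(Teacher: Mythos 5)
Your proposal is correct and follows essentially the same route as the paper: bound the number of maximal independent sets by $O(2^{\tau(G)})$ via Lemmas~\ref{lem:vertex cover swap} and~\ref{lem:MIS number}, enumerate them, and test each for $f$-balance in $O(n)$ time (the paper invokes an $O(1)$-delay enumeration algorithm where you enumerate the subsets $S\subseteq V_c$ directly, which amounts to the same thing). The only nitpick is bookkeeping: your $O(\tau(G)^2)$ per-subset independence test sums to $O(2^{\tau(G)}\cdot\tau(G)^2)$, which is not literally $O(2^{\tau(G)}\cdot n)$ once $\tau(G)^2 > n$; with the intervals pre-sorted this test runs in $O(\tau(G)) = O(n)$ time, restoring the claimed bound.
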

	\begin{proof}
		According to Lemma~\ref{lem:MIS number}, there are $O(2^{\tau(G)})$ maximal independent sets of  $G$.
		The basic idea is to enumerate all the $O(2^{\tau(G)})$ maximal independent sets and compute their maximum balanced subsets. Enumerating all maximal independent sets of an interval graph takes $O(1)$ time per output \cite{DBLP:journals/jda/OkamotoUU08}. Given an arbitrary independent set of $G$ we can compute an $f$-balanced independent subset in $O(n)$ time or conclude that no such subset exists. Therefore, the running time of the algorithm is $O(2^{\tau(G)}\cdot n)$.
	\end{proof}
	
	\section{A 2-Approximation for the 1-Max-Colored Independent Set}\label{sec:approx-bis}

	We note that the \NP-completeness of 1-BIS implies that \textsc{$1$-MCIS}\xspace is an \NP-hard optimization problem.
\new{	Moreover, the $\APX$-hardness of \textsc{Job Interval Scheduling Problem}~\cite{DBLP:conf/approx/Spieksma98}  implies that \textsc{$1$-MCIS} does not have a $\PTAS$ approach as well.
	Spieksma~\cite{DBLP:conf/approx/Spieksma98} showed that a simple sweep-line algorithm can provide approximation ratio $2$ for $\textsc{JISP}$.
	Now we first review this sweep-line algorithm and show that this algorithm  provides ratio $2$ for \textsc{$1$-MCIS}\xspace problem\remove{ and then extend the result to rectangle intersection graphs}.}

	First, we sort the intervals from left to right based on 
	their right end-points. Then, our algorithm scans the intervals from left to right, and at each step selects greedily an interval of a distinct color such that no interval of the same color has been selected before. Moreover, we maintain a solution array $S$ of size~$k$ to store the selected intervals. 
	
	For each interval $I_i$ in this order, we check if the color of $I_i$ is still missing in our solution (by checking if $S[\coloring(I_i)]$ is not yet occupied). If yes, we store $I_i$ in $S[\coloring(i)]$ and remove all the remaining intervals overlapping $I_i$. 
	Otherwise, if $S[\coloring(I_i)]$ is not empty, we remove $I_i$ and continue scanning the intervals. This process is repeated until all intervals are processed. 
	Then, by using a simple charging argument on the colors in an optimal solution that are missing in our greedy solution, we obtain the desired approximation factor. 
	
	\begin{figure}[htbp]
		\centering
		\includegraphics{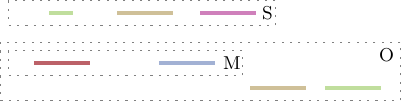} 
		\caption{Comparison of a solution $S$ of the algorithm and an optimal solution $O$. Subset $M \subseteq O$ contains two colors (red and blue) missing from $S$, but each interval in $M$ contains the right endpoint of a different interval from $S$.}
		\label{fig:approximation-example}
	\end{figure}
	
	\begin{restatable}{theorem}{thmapprox}
		\label{thm:2-apprximate}
		Let $G = (V,E)$ be an interval graph with a color assignment of the vertices $\coloring \colon V \rightarrow \{1,\ldots,k\}$. In $O(n \log n)$ time, we can compute an independent set with at least $\lceil \frac{c}{2} \rceil$ colors, where $c$ is the number of colors in a 1-max-colored independent set.
	\end{restatable}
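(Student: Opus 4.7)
The plan is to compare the algorithm's output $S$ against an optimal solution $O$ via a charging argument from the ``missed'' colors in $O$ to the chosen intervals in $S$. Concretely, let $M \subseteq O$ be the subset of intervals whose colors do not appear in $S$. The first step is to show that $|M| \leq |S|$; combined with $|M| \geq c - |S|$ (since every color of $O$ either lies in $S$ or contributes to $M$), this immediately yields $|S| \geq c/2$ and hence $|S| \geq \lceil c/2 \rceil$.

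To prove $|M| \leq |S|$ I would construct an injection $\phi \colon M \to S$ as follows. Consider any $J \in M$. Because $\gamma(J)$ is not in $S$, the only way the algorithm could have discarded $J$ during its sweep is that at the moment $J$ was scanned there already existed an interval $I \in S$, previously chosen and overlapping $J$, that caused $J$ to be removed. Since the sweep processes intervals in ascending order of right endpoints, $I$'s right endpoint is no larger than $J$'s right endpoint, so overlap together with this ordering forces $J$ to contain the right endpoint of $I$. Define $\phi(J) := I$. The injectivity of $\phi$ is the crucial observation: if two distinct $J_1, J_2 \in M$ were charged to the same $I \in S$, then both $J_1$ and $J_2$ would contain the right endpoint of $I$, which would contradict $J_1 \cap J_2 = \emptyset$ (a consequence of $O$ being independent). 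This step is, in my view, the main conceptual content of the proof; everything else is bookkeeping. The delicate point to get right is making sure that the interval $I$ charged to a missed $J$ is in $S$ and is selected before $J$ is considered, which in turn relies on scanning by right endpoint and selecting greedily.

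For the running time, sorting the $n$ intervals by right endpoint takes $O(n \log n)$. The sweep itself can be implemented in $O(n)$ time: maintain the array $S$ indexed by color, and a pointer that advances through the sorted list; when an interval $I_i$ is accepted, advance past all subsequent intervals whose left endpoint lies at or before $I_i$'s right endpoint (these are exactly the intervals that overlap $I_i$ and that must be discarded), and when $I_i$ is rejected due to $S[\gamma(I_i)]$ being occupied, simply move on. Each interval is inspected a constant number of times, so the sweep contributes only $O(n)$ and the total running time is dominated by the sort, giving $O(n\log n)$ as claimed. Putting the approximation guarantee and the runtime together proves the theorem.
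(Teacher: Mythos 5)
Your proposal is correct and follows essentially the same route as the paper: both charge each missed color of $O$ to an interval of $S$ whose right endpoint it contains (the paper picks the rightmost such interval, you pick the one that caused the discard), prove injectivity from the independence of $O$, and combine $|M|\leq |S|$ with $|M|+|S|\geq c$. The runtime analysis also matches.
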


	\begin{proof}
		It is clear from the above description that the greedy algorithm finds an independent set. 
		We maintain a solution array $S$, and it is possible to check if an interval of a particular color is already available in $S$ in constant time. 
		Therefore, the entire algorithm runs in $O(n \log n)$ time. 
		
		In order to prove the approximation factor, we compare the solution $S$ of our greedy algorithm with a fixed 1-max-colored independent set $O$ (see Figure~\ref{fig:approximation-example}).
		Let $M = \{I_i \in O \mid \nexists I_j \in S \text{ with } \coloring(I_j)=\coloring(I_i)\}$ be the subset of $O$ consisting of intervals of missing colors in $S$. 
		Now, consider an interval $I_m \in M$. There must be at least one interval $I_s\in S$, whose right endpoint is contained in the interval $I_m$. Otherwise, since there is no interval of the same color as $I_m$ in $S$, the greedy algorithm would scan $I_m$ as the interval with the leftmost right endpoint in the process and select it in $S$. Thus, the function $\rho$, which maps each interval $I_m$ in $M$ to an interval $I_s$ in $S$ such that $I_s$ is the rightmost interval in $S$ with its right endpoint is contained in $I_m$, is well-defined. Furthermore, $\rho$ is an injective function because of the independence of the set $M$. Therefore, we can conclude that the cardinality of the set $S$ is greater than or equal to the  cardinality of  $M$. Note that, $|M|+|S| \geq |O|$. Hence, $S$ has size at least $\lceil \frac{c}{2} \rceil$.
	\end{proof}
\remove{	
In the following, we consider axis-aligned rectangles of bounded width or bounded height.
Let  $\mathcal R$ be a set of axis-parallel rectangles of bounded width and arbitrary height; the width of each rectangle of $\mathcal{R}$ is at least 1 unit and bounded by $b_w$ for a fixed number $b_w$.
 Let $\leq_{lex}$ denote the \emph{lexicographical ordering} of the points in the plane. For two points $p_1 =(x_1, y_1)$ and $p_2 =(x_2, y_2)$, $p_1 \leq_{lex} p_2$ if $y_1 < y_2$ or  $y_1 = y_2 \land x_1 < x_2$.
We assume 
that the rectangles in  set $R=\{r_1, r_2, \ldots r_n\}$ are in a general position such that each rectangle of $\mathcal R$ has unique corners. 
We also assume without loss of generality that rectangles of $\mathcal R$ are sorted in a decreasing order by their bottom left corner points lexicographically.} 

\remove{We get the following observation by a simple packing argument; see Figure~\ref{fig:packing}.
 \begin{observation}
 \label{obs:independency packing}
Let $S\subset \mathcal R$ be a subset of $\mathcal R$ consisting of 
pairwise overlap-free rectangles and let $r\in \mathcal R$ be an arbitrary rectangle of $\mathcal R$.
There are at most $(\lceil b_w \rceil +1)$ rectangles of $\mathcal R$ that have a greater index than $r$ and intersect $r$.
\end{observation}
\begin{figure}[htbp]
    \centering
    \includegraphics{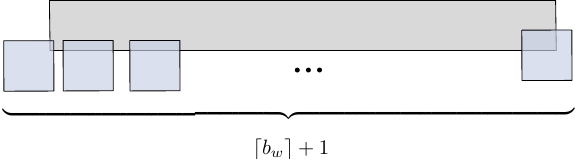}
    \caption{An illustration of Observation~\ref{obs:independency packing}. At most $(\lceil b_w \rceil +1)$ rectangles (unit blue squares) can be placed on the bottom edge of the  rectangle $r$ (gray).}
    \label{fig:packing}
\end{figure}
With the same idea that we used for the interval graphs, we apply the simple greedy  approach on $\mathcal R$
by sweeping the rectangles from top to bottom in the ordering of $\mathcal R$ and greedily extend an independent set by adding new rectangles of non-collected color classes.
It is not hard to see that this approach finds an independent set of $\mathcal R$ in $O(n \log\, n)$ time (including the time of ordering rectangles of $\mathcal R$ lexicographically).
Now we prove that this simple greedy sweep-line approach reaches a approximation ratio $\lceil b_w \rceil +2$ for the \textsc{$1$-MCIS}\xspace problem.
\begin{theorem}
Let $G = (V,E)$ be a rectangle intersection graph induced by a set of rectangles $\mathcal R$ with width bound $b_w$ and let $\coloring \colon V \rightarrow \{1,\ldots,k\}$ be a color assignment of $V$.
The independent set computed by the greedy sweep-line approach contains  at least $\lceil \frac{c}{\lceil b_w \rceil +2} \rceil$ colors, where $c$ is the number of colors in a 1-max-colored independent set with respect to the color assignment $\coloring$.    
\end{theorem}
\begin{proof}
This proof follows the same charging argument as for the interval graph. 
Let $S$ be the independent set computed by the greedy sweep-line approach, and let $O$ be a fixed \textsc{$1$-MCIS} of $G$.
We partition $O$ into two set $M$ and $H$ such that $H$ consists of all intervals whose colors are picked in $S$ and $M$ consists of intervals of colors missing in $S$.
By definition, it is clear that $|H| \leq |S|$.
It remains to prove that $|M| \leq (\lceil b_w \rceil + 1)|S|$.
For each rectangle $m$ of $M$, there must be a rectangle $s \in S$ such that $m$ intersects $s$ and $s$ has smaller index than $m$; Otherwise, $m$ would have been picked in $S$.
Thus, the function $\rho$, which maps each rectangle $m$ of $M$ to the rectangle $s$ of $S$ which has the smallest index among all rectangle of $S$ intersecting $m$ is well-defined. 
By Observation~\ref{obs:independency packing},  for every rectangle of $S$ there are at most $\lceil b_w \rceil +1$ rectangles of $M$ intersecting $s$ with indices greater than the index of $s$.
It implies that at most $\lceil b_w \rceil +1$ rectangles of $M$ are mapped by $\rho$ to the same rectangle of $S$.
Hence, $M$ has size at most $(\lceil b_w \rceil + 1)|S|$.
Overall, we get $|O| = |H| + |M| \leq |S| + (\lceil b_w \rceil + 1)|S|$.
\end{proof}
\begin{corollary}
The greedy sweep-line approach computes a $3$-approximation solution for \textsc{$1$-MCIS} problem on unit rectangles.
\end{corollary}
}
\begin{center}
\fbox{
\begin{minipage}{0.97\textwidth}
\subparagraph{Correction to an earlier version of this paper.}\new{
In an earlier version of this work~\cite{bhkln-bidscig-21},  we presented a local search approach and claimed it being a \PTAS\xspace for the \textsc{$1$-MCIS} problem.
After its publication date, we found the paper by Spieksma~\cite{DBLP:conf/approx/Spieksma98} and noted that our claim was in contradiction with their \APX-hardness result of the \textsc{Job Interval Scheduling Problem}.
We rechecked our charging argument and found an error in our proof.}

\new{More precisely, let $\mathcal{L}$ be the solution of the local search algorithm, let $\mathcal{O}$ be an optimal solution, and let $H$ be the graph induced by $\mathcal{L}\cup \mathcal{O}$ that has one vertex for each interval and two vertices are connected by an edge if the corresponding intervals intersect or belong to the same color class.
We distinguish between these two types of edges:
the former edges are called \emph{interval-edges} and 
the latter are called \emph{color-edges}.
We claimed that ``$H$ is a planar graph'' \cite[Lemma 4]{bhkln-bidscig-20}.
This statement is incorrect; see a counterexample in the following figure.
Here, the graph $H$ induced by $\mathcal{L} \cup \mathcal{O}$ contains color-edges $(L_i, O_i)$ and $(L_{ij}, O_{ij})$ as well as interval-edges  $(L_i, O_{ij})$ and $(L_{ij},O_j)$ for $i, j \in  \{1,2,3\}$ and $i \neq j$.
The color-edges $(O_i, L_i)$
and the internally disjoint paths $(L_i, O_{ij},L_{ij},O_j)$ for $i, j \in  \{1,2,3\}$ and $i \neq j$ together yield a  $K_{3,3}$-minor of $H$, which contradicts the planarity of $H$.}

\vspace{0.5cm}

\includegraphics[width = \textwidth]{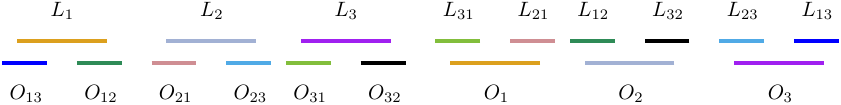}
\end{minipage} } 
\end{center}

	\section{Conclusions}
	In this paper, we have studied the $f$-Balanced Independent and Dominating
	set problem for interval graphs. We proved that these problems are
	\NP-complete and obtained algorithmic results for the $f$-Balanced
	Independent Set problem. An interesting direction for future work is to obtain algorithmic
	results for the $f$-Balanced Independent Set problem for other geometric
	intersection graphs, e.g., rectangle intersection graphs, unit disk graphs
	etc. Our results may help to tackle these problems since algorithms for
	computing (maximum weighted) independent sets of geometric objects in the
	plane often use algorithms for interval graphs as subroutines. Another
	interesting problem is to design approximation or parameterized algorithm
	for the $f$-Balanced Dominating Set problem for interval graphs.

\bibliography{main.bib}
\end{document}